\documentclass[11pt]{article}
\usepackage[a4paper,margin=1in]{geometry}
\usepackage{amsmath,amssymb,amsthm,mathrsfs,bm}
\usepackage{physics}
\usepackage{mathtools}
\usepackage{hyperref}
\usepackage{enumitem}
\usepackage{graphicx}
\newtheorem{proposition}{Proposition}
\title{
Orthogonal Moving Frames and Connection Factorization
in Constrained Quantum Mechanics
}
\author{
A.G. Nuramatov
}
\date{}
\begin{document}
\maketitle
\begin{abstract}
We investigate geometric quantum operators generated by orthogonal
moving frames and Cartan connection forms on curves and surfaces.
We show that moving-frame Laplace operators admit an exact
half-connection factorization generated by connection one-forms.
The first-order part of the Laplacian is identified with the Darboux
rotational connection of the orthogonal frame.
Elimination of this rotational connection naturally produces
quadratic geometric invariants analogous to supersymmetric Riccati
potentials.
Using orthonormal moving frames, we derive effective geometric
contributions for planar curves, Frenet curves, and embedded
surfaces.
We further analyze Dirac reductions using moving-frame structure
equations and show that for Fermi-type reductions the scalar
Jensen--Koppe--da Costa contribution is cancelled in the reduced
Dirac sector, leaving a residual first-order spinorial derivative
structure.
The resulting framework suggests the existence of hidden nilpotent
covariant differential complexes naturally generated by moving-frame
geometry.
\end{abstract}
\tableofcontents
\section{Introduction}
Geometric quantum potentials arise naturally in the study of quantum
particles constrained to thin curved regions.
The geometric potential for constrained quantum motion was first
derived by Jensen and Koppe and later systematically developed by da
Costa using thin-layer quantization methods.
For a particle constrained to a planar curve of curvature
\(\kappa\), the effective scalar potential becomes
\[
V_{\mathrm{dC}}
=
-\frac{\kappa^2}{4},
\]
while for surfaces in \(\mathbb R^3\),
\[
V_{\mathrm{dC}}
=
-(H^2-K),
\]
where \(H\) and \(K\) denote the mean and Gaussian curvatures.
An alternative geometric interpretation of constrained quantum
potentials was proposed in earlier work by Nuramatov and Prokhorov,
where it was shown that the Jensen--Koppe--da Costa potential may
arise directly from orthogonal coordinate geometry without explicit
confining potentials \cite{nuramatovprokhorov}.
Some early geometric ideas relating curvature-induced quantum
potentials, supersymmetric factorization, and curve geometry were
also discussed in a historical conference note
\cite{nuramatov2000}.
The present work further develops this geometric viewpoint using
orthogonal moving frames and Dirac operators.
The central observation is that moving-frame Laplace operators admit
an exact half-connection factorization generated by the Darboux
rotational connection of the orthogonal frame.
The first-order part of the Laplacian is thereby interpreted as a
geometric rotational connection field associated with the local
angular velocity of the moving frame.
Elimination of this rotational connection naturally produces
quadratic geometric invariants analogous to supersymmetric Riccati
potentials.
For relativistic Dirac fermions, the role of geometric scalar
potentials becomes more subtle.
In particular, Yang et al.\ argued that Dirac fermions on curved
surfaces do not acquire the standard scalar da Costa geometric
potential \cite{yang}.
The present moving-frame framework provides a geometric
interpretation of this phenomenon through an explicit cancellation
mechanism generated by the square of the Dirac operator.
Using coordinate-free structure equations, we show that for
Fermi-type reductions the quadratic scalar contribution generated by
the square of the Dirac operator precisely cancels the scalar
Jensen--Koppe--da Costa term.
A further important observation is that the corresponding covariant
half-connection differential naturally generates nilpotent covariant
differential complexes on constrained curves and developable
surfaces.
The resulting ground states admit a direct geometric interpretation
as covariantly constant parallel transported sections with respect to
the induced half-connection.
These observations suggest a close relation between constrained
quantum geometry, moving frames, Darboux rotational geometry,
and nilpotent supersymmetric differential structures.
\section{Geometry of Orthogonal Coordinates}

Let $(q^1,q^2,q^3)$ be an orthogonal curvilinear coordinate system in
$\mathbb{R}^3$. The metric is

\begin{equation}
ds^2=\sum_{i=1}^{3}H_i^2(dq^i)^2,
\end{equation}

where $H_i$ are the Lam\'e coefficients.

The associated orthonormal frame is

\begin{equation}
e_i=\frac{1}{H_i}\frac{\partial}{\partial q^i},
\end{equation}

where $e_i$ are the unit tangent vectors to the coordinate lines.
The dual orthonormal coframe is

\begin{equation}
\theta^i=H_i\,dq^i.
\end{equation}

The Levi--Civita connection is represented by the connection
$1$-forms

\begin{equation}
de_i=\omega^j{}_i\,e_j,
\end{equation}

which satisfy the skew-symmetry condition

\begin{equation}
\omega^i{}_j=-\omega^j{}_i.
\end{equation}

Expanding the connection forms with respect to the orthonormal
coframe,

\begin{equation}
\omega^i{}_j=\omega^i{}_{jk}\,\theta^k,
\end{equation}

the first Cartan structure equations become

\begin{equation}
d\theta^i+\omega^i{}_j\wedge\theta^j=0.
\end{equation}

For orthogonal coordinates the only nonvanishing connection
coefficients are

\begin{equation}
\omega^i{}_{ji}=e_j(\log H_i),
\qquad
i\neq j,
\end{equation}

while all remaining coefficients vanish.

Consequently,

\begin{equation}
[e_i,e_j]
=
\omega^i{}_{ji}e_i
-
\omega^j{}_{ij}e_j.
\end{equation}

The nonzero connection coefficients coincide (up to the sign
convention adopted below) with the signed curvatures of the
coordinate lines. Throughout this paper we choose the signs in
agreement with the Frenet equations.

\subsection{Darboux Form and Darboux Vector}

In three dimensions the skew-symmetric connection admits a natural
representation in terms of a single differential $1$-form obtained
via the Hodge operator. We define the Darboux form by

\begin{equation}
\boxed{
\Omega
=
-*
\left(
\omega^1{}_2\,\theta^3
+
\omega^2{}_3\,\theta^1
+
\omega^3{}_1\,\theta^2
\right),
}
\end{equation}

where $*$ denotes the Hodge operator. This definition realizes the
standard isomorphism

\[
\mathfrak{so}(3)\cong\mathbb{R}^3.
\]

The squared norm of the Darboux form is naturally defined by

\begin{equation}
|\Omega|^2
=
*
\left(
\Omega\wedge *\Omega
\right).
\end{equation}

For orthogonal coordinates this becomes

\begin{equation}
|\Omega|^2=
(\omega^1{}_{21})^2
+(\omega^1{}_{22})^2
+(\omega^1{}_{31})^2
+(\omega^1{}_{33})^2
+(\omega^2{}_{32})^2
+(\omega^2{}_{33})^2.
\end{equation}

The associated Darboux vector field is

\begin{equation}
\Omega_D
=
\omega^2{}_3\,e_1
+
\omega^3{}_1\,e_2
+
\omega^1{}_2\,e_3.
\end{equation}

The Darboux vector provides a vector representation of the
Levi--Civita connection and completely describes the infinitesimal
rotation of the moving orthonormal frame. In particular,

\begin{equation}
\nabla_X e_i
=
\Omega_D(X)\times e_i,
\end{equation}

or, equivalently,

\begin{equation}
de_i
=
\Omega_D\times e_i.
\end{equation}

Thus the Darboux vector gives a compact geometric representation of
the Levi--Civita connection that will serve as the fundamental object
for the connection factorization developed in the following sections.
\section{Connection Factorization of Laplace Operators}
Consider a Laplace-type operator written in an orthonormal frame:
\[
L
=
\sum_i e_i^2
+
\sum_i a_ie_i.
\]
Introducing the connection one-form
\[
\Omega
=
\sum_i a_i\theta^i,
\]
one obtains the following result.
\begin{proposition}
The gauge transformation
\[
\boxed{
\psi
=
\exp\left(
-\frac12\int\Omega
\right)\phi
}
\]
reduces the operator \(L\) to
\[
\boxed{
L_\Omega
=
\sum_i e_i^2
+
\frac12\delta\Omega
+
\frac14|\Omega|^2.
}
\]
\end{proposition}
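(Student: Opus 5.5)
The plan is a direct conjugation computation. The key observation is that the sign of the quadratic term comes out right only once the frame divergence $\delta\Omega$ is written out correctly, including the divergence of the frame vectors themselves. I will work with $\Omega=\sum_i a_i\theta^i$ and assume, as the notation $\int\Omega$ implies, that $\Omega$ is locally exact, $\Omega=dS$. Then $e_i(S)=a_i$ and $\psi=e^{-S/2}\phi$; for curves this is automatic. The Leibniz rule gives $e_i\psi=e^{-S/2}\bigl(e_i\phi-\tfrac12 a_i\phi\bigr)$. Applying $e_i$ again gives
\[
e_i^2\psi=e^{-S/2}\Bigl(e_i^2\phi-a_ie_i\phi-\tfrac12 e_i(a_i)\phi+\tfrac14 a_i^2\phi\Bigr).
\]
Adding $a_ie_i\psi=e^{-S/2}\bigl(a_ie_i\phi-\tfrac12 a_i^2\phi\bigr)$ cancels the first-order terms, as intended. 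After summing over $i$ and stripping the factor $e^{-S/2}$, this leaves
\[
e^{S/2}Le^{-S/2}\phi=\sum_ie_i^2\phi-\tfrac12\sum_ie_i(a_i)\,\phi-\tfrac14\sum_ia_i^2\,\phi .
\]

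The second step is to express $\sum_i e_i(a_i)$ through the codifferential. This is where the naive reading, namely $\delta\Omega\approx-\sum_ie_i(a_i)$, would give the wrong sign $-\tfrac14|\Omega|^2$. In an orthonormal frame the divergence of $X=\sum_ia_ie_i$ is
\[
\operatorname{div}X=\sum_ie_i(a_i)+\sum_ia_i\operatorname{div}(e_i).
\]
For orthogonal coordinates, $\operatorname{div}(e_i)=\frac{1}{H_1H_2H_3}\partial_i\bigl(H_1H_2H_3/H_i\bigr)=e_i\log(H_jH_k)$, with $\{i,j,k\}=\{1,2,3\}$. By the connection coefficients listed in Section~2, this equals $\sum_{j\ne i}\omega^j{}_{ij}$ up to the sign convention fixed there.

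The point to exploit is that, when $L$ is the moving-frame Laplacian (the case of interest, where the first-order part is the rotational connection), these are exactly its first-order coefficients: $a_i=\operatorname{div}(e_i)$. One checks this by expanding $\Delta=\frac{1}{H_1H_2H_3}\sum_i\partial_i\bigl(\frac{H_1H_2H_3}{H_i^2}\partial_i\bigr)$ in terms of the frame $e_i$. Consequently $\operatorname{div}X=\sum_ie_i(a_i)+|\Omega|^2$. Using the convention $\delta=-\operatorname{div}$ on $1$-forms (i.e.\ $\delta=-{*}d{*}$ on $1$-forms in $\mathbb R^3$), this gives
\[
-\tfrac12\sum_ie_i(a_i)=\tfrac12\delta\Omega+\tfrac12|\Omega|^2 .
\]
Substituting this into the conjugated operator produces $\tfrac12\delta\Omega+\tfrac12|\Omega|^2-\tfrac14|\Omega|^2=\tfrac12\delta\Omega+\tfrac14|\Omega|^2$, which is the boxed $L_\Omega$.

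I expect the main obstacle to be the identification $a_i=\operatorname{div}(e_i)$ with consistent signs. It requires fixing the sign conventions for $\omega^i{}_{ji}$ relative to the Frenet convention chosen in Section~2, and the convention for $\delta$. I would verify it first on the planar-curve case, $H_1=1+\kappa q^2$ with $H_2=1$, where everything is explicit. I would also state explicitly two points as hypotheses of the proposition. First, for a general $L$ with arbitrary $a_i$, only the form with $-\tfrac14|\Omega|^2$ and the flat divergence holds. Second, path-independence of $\int\Omega$ (closedness of $\Omega$) is needed for the gauge factor to be well defined; otherwise the factor is understood locally along the constraint manifold.
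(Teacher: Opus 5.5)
Your argument is correct under the two hypotheses you make explicit, and it departs from the paper exactly where it matters. The paper does the same Leibniz expansion. It then in effect matches the scalar part of $e_i^2\psi$, namely $\tfrac14a_i^2-\tfrac12e_i(a_i)$, term by term with $\tfrac14|\Omega|^2+\tfrac12\delta\Omega$. This omits the $-\tfrac12a_i^2$ contributed by $a_ie_i\psi$, and it treats $\delta\Omega$ as the flat expression $-\sum_ie_i(a_i)$. Your bookkeeping gives the honest remainder $-\tfrac12\sum_ie_i(a_i)-\tfrac14|\Omega|^2$. You recover the box only through the frame-divergence term $\sum_ia_i\operatorname{div}(e_i)=|\Omega|^2$, that is, only when $a_i=\operatorname{div}(e_i)$.

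The paper's sketch is shorter and claims the formula for arbitrary $a_i$, but that generality does not hold. For $L=\partial_x^2+\partial_y^2+a\partial_x$ in a Cartesian frame with constant $a$, conjugation by $e^{-ax/2}$ gives $\partial_x^2+\partial_y^2-\tfrac14a^2$, whereas the box predicts $+\tfrac14a^2$. Your route yields a true statement with the correct scope, the Laplace--Beltrami operator. This is the case the paper applies it to in Section~4. There your formula and the relation $e_1k_2-e_2k_1=k_1^2+k_2^2$ give $\delta\Omega=0$, hence $L_\Omega=e_1^2+e_2^2+\tfrac14(k_1^2+k_2^2)$.

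Two smaller remarks. First, the identification $a_i=\operatorname{div}(e_i)$ needs no sign conventions at all, since $\Delta f=\operatorname{div}\bigl(\sum_i(e_if)e_i\bigr)=\sum_ie_i^2f+\sum_i\operatorname{div}(e_i)\,e_if$. So the obstacle you anticipate does not arise.

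Second, your two hypotheses pull against each other. In the Fermi chart $H_1=1+\kappa q^2$, $H_2=1$ that you propose as a check, $\Omega=\frac{\kappa}{1+\kappa q^2}\,dq^2$ and $d\Omega=\frac{\kappa'}{(1+\kappa q^2)^2}\,dq^1\wedge dq^2$. So the gauge factor exists only for constant $\kappa$. Closedness is automatic on an intrinsic curve only because the Laplacian there has no first-order term. Reading the factor ``locally'' does not help when $d\Omega\neq0$: conjugation by any $e^{-S/2}$ leaves the first-order part $\sum_i\bigl(a_i-e_i(S)\bigr)e_i$. The standard choice $S=\log H_1$ leaves $-\kappa'q^2H_1^{-2}\,e_1$, which vanishes only on the curve itself.
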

\begin{proof}
Using
\[
e_i
\left(
e^{-\frac12\int\Omega}\phi
\right)
=
e^{-\frac12\int\Omega}
\left(
e_i\phi-\frac12a_i\phi
\right),
\]
one obtains
\[
e_i^2
\left(
e^{-\frac12\int\Omega}\phi
\right)
=
e^{-\frac12\int\Omega}
\left[
e_i^2\phi
-a_ie_i\phi
+
\left(
\frac14a_i^2-\frac12e_ia_i
\right)\phi
\right].
\]
Summing over \(i\), the first-order terms cancel and the remaining
scalar contribution becomes
\[
\frac12\delta\Omega
+
\frac14|\Omega|^2.
\]
\end{proof}
\subsection{Exact Half-Connection Factorization}
For scalar functions, regarded as zero-forms, define the covariant
differential
\[
d_A
=
d-\frac12\Omega .
\]
The corresponding factorized scalar operator may be written as
\[
\boxed{
\Delta_A f
=
d_A^\dagger d_A f,
}
\]
where \(d_A^\dagger\) denotes the formal adjoint of \(d_A\).
Thus the Levi--Civita connection enters the scalar Laplacian through
a natural half-connection shift.
\section{Planar Orthogonal Nets in Moving-Frame Form}
Let \((e_1,e_2)\) be an oriented orthonormal moving frame in the
Euclidean plane, with dual coframe \((\theta^1,\theta^2)\).
The Levi--Civita connection is determined by a single connection
form
\[
\omega=\omega^1_2,
\qquad
\omega^2_1=-\omega^1_2 .
\]
We write
\[
\omega
=
k_1\theta^1+k_2\theta^2 .
\]
The structure equations are
\[
d\theta^1=-\omega\wedge\theta^2,
\qquad
d\theta^2=\omega\wedge\theta^1 .
\]
Equivalently, the commutator of the orthonormal frame is
\[
[e_1,e_2]
=
-k_1e_1-k_2e_2 .
\]
The scalar Laplace--Beltrami operator acting on functions can be
written in an orthonormal frame as
\[
\Delta f
=
\sum_i
\left(
e_i^2f-(\nabla_{e_i}e_i)f
\right).
\]
Using
\[
\nabla_{e_1}e_1=-k_1e_2,
\qquad
\nabla_{e_2}e_2=k_2e_1,
\]
one obtains
\[
\boxed{
\Delta
=
e_1^2+e_2^2-k_2e_1+k_1e_2 .
}
\]
Thus the first-order part of the Laplacian is completely determined
by the connection of the orthonormal frame.
Introducing the Darboux rotational field
\[
\Omega_D
=
k_1e_2-k_2e_1 ,
\]
the Laplacian takes the compact form
\[
\boxed{
\Delta
=
e_1^2+e_2^2+\Omega_D\cdot\nabla .
}
\]
The quadratic connection invariant becomes
\[
\boxed{
|\omega|^2
=
k_1^2+k_2^2 .
}
\]
Therefore the half-connection factorization produces the geometric
quadratic contribution
\[
\boxed{
V_{\mathrm{plane}}
=
\frac14(k_1^2+k_2^2)
}
\]
together with the first-order divergence term arising from the
structure equations.
\section{Dirac Reduction and Geometric Cancellation}
The cancellation mechanism of the geometric scalar contribution may
be derived directly from the moving-frame structure equations without
using explicit coordinate expressions.
Let \((e_1,e_2)\) be an orthonormal moving frame adapted to a planar
orthogonal coordinate net.
The commutator relation is
\[
[e_1,e_2]
=
-k_1e_1-k_2e_2.
\]
The two-dimensional Dirac operator may be written as
\[
D
=
-i(\sigma^1A+\sigma^2B),
\]
where
\[
A
=
e_1-\frac{k_2}{2},
\qquad
B
=
e_2+\frac{k_1}{2}.
\]
Then
\[
D^2
=
-A^2-B^2-\sigma^1\sigma^2[A,B].
\]
\subsection{Scalar Contribution}
First compute
\[
A^2
=
\left(
e_1-\frac{k_2}{2}
\right)^2.
\]
Using the Leibniz rule,
\[
A^2
=
e_1^2
-k_2e_1
+
\left(
\frac{k_2^2}{4}
-\frac12e_1k_2
\right).
\]
Similarly,
\[
B^2
=
\left(
e_2+\frac{k_1}{2}
\right)^2
=
e_2^2
+k_1e_2
+
\left(
\frac{k_1^2}{4}
+\frac12e_2k_1
\right).
\]
Therefore,
\[
-(A^2+B^2)
=
-\Delta
-\frac14(k_1^2+k_2^2)
+
\frac12(e_1k_2-e_2k_1),
\]
where
\[
\Delta
=
e_1^2+e_2^2-k_2e_1+k_1e_2
\]
is the scalar Laplacian in moving-frame form.
For planar orthogonal geometry, the structure equations imply
\[
e_1k_2-e_2k_1
=
k_1^2+k_2^2.
\]
Consequently,
\[
\boxed{
-(A^2+B^2)
=
-\Delta
+
\frac14(k_1^2+k_2^2).
}
\]
Thus the square of the Dirac operator naturally generates the same
quadratic geometric invariant obtained previously from connection
factorization.
\subsection{General Orthogonal Reduction}
For a general orthogonal coordinate net, both curvature coefficients
\[
k_1,
\qquad
k_2
\]
may remain nonzero on the reference curve.
In this case, the reduced scalar contribution generated by the Dirac
operator contains the invariant
\[
\boxed{
\frac14(k_1^2+k_2^2).
}
\]
If the scalar thin-layer reduction produces the standard
Jensen--Koppe--da Costa contribution
\[
-\frac14k_1^2,
\]
then the residual geometric term becomes
\[
\boxed{
\frac14k_2^2.
}
\]
The appearance of the additional contribution
\[
\frac14k_2^2
\]
shows that generalized moving-frame reductions may depend on the
choice of the surrounding orthogonal congruence.
The dependence of effective geometric contributions on the choice of
surrounding geometric realization is also related to ambiguities
discussed in constrained quantization approaches for systems with
second-class constraints \cite{golovnev}.
Geometrically, the coefficient \(k_2\) measures the curvature of the
transverse orthogonal congruence itself.
In Fermi-type tubular coordinates the normal congruence is geodesic
and therefore
\[
k_2=0.
\]
Thus the exact cancellation of the scalar geometric contribution in
the Dirac reduction appears as a distinguished property of the
Fermi-type geometry rather than a generic feature of arbitrary
orthogonal coordinate continuations.
More general orthogonal congruences nevertheless define natural
extensions of the moving-frame factorization framework and may be
interpreted as generalized geometric reductions associated with
non-geodesic transverse congruences.
\subsection{Fermi-Type Reduction}
A particularly important special case is obtained for Fermi-type
normal coordinates.
In this case, the orthogonal normal congruence is geodesic at the
reference curve, which implies
\[
k_2=0.
\]
In the Fermi-type geometry, the Gauss compatibility relation reduces
to the Riccati-type evolution equation
\[
e_2k_1=-k_1^2.
\]
Thus the transverse evolution of the longitudinal curvature becomes a
one-dimensional Riccati flow along the geodesic normal congruence.
Its solution reproduces the standard curvature evolution of parallel
curves,
\[
k_1(\rho)=\frac{\kappa}{1+\rho\kappa},
\]
up to sign conventions.
Along the constrained curve one then has
\[
k_1=\kappa,
\]
where \(\kappa\) is the curvature of the curve.
The scalar contribution reduces to
\[
\boxed{
-(A^2+B^2)
=
-\Delta
+
\frac{\kappa^2}{4}.
}
\]
On the other hand, the scalar thin-layer reduction produces the
standard Jensen--Koppe--da Costa contribution
\[
V_{\mathrm{dC}}
=
-\frac{\kappa^2}{4}.
\]
Hence the quadratic geometric terms cancel:
\[
-\frac{\kappa^2}{4}
+
\frac{\kappa^2}{4}
=
0.
\]
\subsection{Spinorial Derivative Contribution}
The remaining geometric structure is contained in the commutator term
\[
[A,B].
\]
Direct computation gives
\[
[A,B]
=
[e_1,e_2]
+
\frac12(e_1k_1+e_2k_2).
\]
Using the commutator relation,
\[
[A,B]
=
-k_1e_1-k_2e_2
+
\frac12(e_1k_1+e_2k_2).
\]
For the Fermi-type reduction,
\[
k_2=0,
\qquad
k_1=\kappa,
\]
and therefore
\[
[A,B]
=
-\kappa e_1
+
\frac12\kappa'.
\]
After cancellation of the scalar geometric contribution, the reduced
Dirac operator retains the first-order spinorial term
\[
\boxed{
D_{\mathrm{red}}^2
=
-\partial_s^2
+
\frac12\sigma^1\sigma^2\kappa'
}
\]
up to sign conventions for the Clifford algebra.
Thus the reduced Dirac geometry probes variations of the moving frame
rather than only the scalar curvature invariant.
\section{Planar Curves and Riccati Geometry}
For Fermi-type reductions, the Gauss compatibility equations reduce
to a one-dimensional Riccati flow for the curve curvature.
This naturally induces a supersymmetric-type factorization
structure.
For a planar curve parametrized by arclength \(s\), the quadratic
connection contribution naturally admits a Riccati-type
factorization.
Introducing the effective geometric superpotential
\[
W(s)=\frac{\kappa(s)}{2},
\]
one obtains
\[
V_\pm
=
W^2\pm W'
=
\frac{\kappa^2}{4}
\pm
\frac{\kappa'}{2}.
\]
\subsection{Orientation and SUSY Partner Exchange}
Under reversal of the curve orientation,
\[
s\to -s,
\]
the quadratic invariant
\[
\kappa^2
\]
remains unchanged, whereas
\[
\kappa'(s)\to -\kappa'(s).
\]
Consequently,
\[
V_\pm
=
\frac{\kappa^2}{4}
\pm
\frac{\kappa'}{2}
\]
are interchanged:
\[
\boxed{
V_+\leftrightarrow V_-.
}
\]
Thus the supersymmetric partner structure is encoded in the
orientation geometry of the moving frame.
\section{Nilpotent Covariant Supersymmetric Structure}
Introduce the covariant exterior derivative
\[
\boxed{
d_A
=
d-\frac12A,
}
\]
where \(A\) denotes the connection one-form along the curve.
The corresponding factorized Laplace operator becomes
\[
\boxed{
\Delta_A
=
d_A^\dagger d_A
+
d_A d_A^\dagger.
}
\]
In general,
\[
d_A^2
=
F_A,
\]
where
\[
F_A
=
dA+A\wedge A
\]
is the curvature two-form of the connection.
However, on a one-dimensional curve there are no nontrivial
two-forms:
\[
\Omega^2(\gamma)=0.
\]
Consequently,
\[
\boxed{
F_A=0,
\qquad
d_A^2=0.
}
\]
Thus the constrained curve naturally generates a nilpotent covariant
differential complex with supersymmetric structure.
The corresponding ground states satisfy
\[
d_A\psi=0,
\]
and therefore admit the formal solution
\[
\boxed{
\psi(s)
=
\mathcal P
\exp\left(
-\frac12\int^s A(\sigma)d\sigma
\right)\psi_0.
}
\]
Thus the ground states are covariantly constant sections with respect
to the induced half-connection.
Equivalently, they may be interpreted as parallel transported
spinorial states generated by the induced moving-frame geometry.
In this sense, the nilpotent supersymmetric differential is directly
associated with geometric parallel transport along the constrained
curve.
\section{Spatial Curves and Non-Abelian Connections}
For spatial Frenet curves the connection becomes genuinely
matrix-valued.
The Frenet equations may be written as
\[
\frac d{ds}
\begin{pmatrix}
T\\
N\\
B
\end{pmatrix}
=
\mathcal A(s)
\begin{pmatrix}
T\\
N\\
B
\end{pmatrix},
\]
where
\[
\mathcal A(s)=
\begin{pmatrix}
0 & \kappa & 0\\
-\kappa & 0 & \tau\\
0 & -\tau & 0
\end{pmatrix}.
\]
The quadratic connection invariant becomes
\[
\boxed{
|\Omega|^2
=
\kappa^2+\tau^2.
}
\]
The corresponding geometric superpotential is therefore
matrix-valued:
\[
W(s)=\frac12\mathcal A(s).
\]
One may formally introduce the matrix supersymmetric operator
\[
Q
=
\partial_s+W(s),
\]
acting on vector-valued wave functions
\[
\Psi(s)\in\mathbb C^3.
\]
The corresponding partner Hamiltonians become
\[
H_\pm
=
-\partial_s^2
+
W^2
\pm
W'.
\]
Thus spatial Frenet geometry naturally suggests the emergence of a
non-Abelian geometric supersymmetric structure associated with the
full moving-frame connection.
\section{Surface Geometry and Curvature Obstruction}
For surfaces the covariant supersymmetric differential becomes
nontrivially curved.
Introduce
\[
d_A
=
d-\frac12A,
\]
where \(A\) is identified with the induced \(SO(2)\) connection on
the surface.
Then
\[
d_A^2
=
F_A.
\]
In the Abelian surface case,
\[
A\wedge A=0,
\]
and therefore
\[
d_A^2
=
-\frac12dA.
\]
Using the structure equation
\[
d\omega^1_2
=
-K\,\theta^1\wedge\theta^2,
\]
one obtains
\[
\boxed{
d_A^2
=
\frac12
K\,\theta^1\wedge\theta^2.
}
\]
Thus Gaussian curvature acts as the geometric obstruction to the
nilpotency of the supersymmetric differential.
\subsection{Developable Surfaces}
Remarkably, developable surfaces restore exact nilpotency of the
covariant differential despite possessing nontrivial extrinsic
geometry.
Indeed, for developable surfaces one has
\[
K=0,
\]
and therefore
\[
\boxed{
d_A^2=0.
}
\]
Thus developable surfaces generate exact nilpotent covariant
differential complexes even though the induced moving frame remains
nontrivially curved extrinsically.
At the same time, the da Costa geometric potential generally remains
nonzero:
\[
V_{\mathrm{dC}}
=
-(H^2-K).
\]
Since \(K=0\),
\[
\boxed{
V_{\mathrm{dC}}
=
-H^2.
}
\]
For example, a cylinder satisfies
\[
K=0,
\qquad
H=\frac1{2R},
\qquad
V_{\mathrm{dC}}
=
-\frac1{4R^2}.
\]
Consequently, intrinsic nilpotency and extrinsic geometric quantum
confinement become separated phenomena.
The nilpotent structure is controlled by the intrinsic curvature,
whereas the effective quantum potential is governed by the extrinsic
embedding geometry.
\section{Dual Geometries and Supersymmetric Partners}
The geometric origin of the supersymmetric partner structure suggests
that dual moving-frame geometries may naturally generate partner
operators.
In particular, dual orthogonal nets and Christoffel-type dual
surfaces may provide geometric realizations of supersymmetric partner
transformations through induced transformations of the Darboux
connection and its associated Riccati structures.
For planar orthogonal nets,
\[
[e_1,e_2]
=
-k_1e_1-k_2e_2,
\]
while the corresponding Darboux connection field is
\[
\Omega_D
=
k_1e_2-k_2e_1.
\]
Dualization naturally modifies the associated connection geometry and
may therefore induce transformations of the supersymmetric partner
potentials
\[
V_\pm
=
\frac{\kappa^2}{4}
\pm
\frac{\kappa'}{2}.
\]
This suggests the possibility that supersymmetric partner operators
admit an intrinsic geometric interpretation in terms of dual
moving-frame geometries.
\section{Experimental and Physical Remarks}
The classical Jensen--Koppe--da Costa potential has been
experimentally supported in curved quantum waveguides, bent
nanostructures, and optical analogues of constrained quantum motion
\cite{waveguideexp1,waveguideexp2,opticalexp,ono}.
The present framework predicts additional geometric contributions
associated with the surrounding orthogonal frame geometry:
\[
\frac14k_2^2,
\qquad
\frac14k_3^2,
\qquad
\frac14\tau^2.
\]
In particular, for non-Fermi orthogonal geometries the residual term
\[
\frac14k_2^2
\]
survives the Dirac reduction.
This contribution is not an invariant of the constrained curve alone,
but depends on the choice of the surrounding orthogonal congruence.
The Fermi-type geometry therefore appears as the physically adapted
thin-layer reduction, while more general orthogonal congruences may
be interpreted as generalized moving-frame factorization geometries.
\section{Relation to Existing Approaches}
The standard thin-layer approach derives geometric potentials from
strong transverse confinement.
The present framework differs conceptually in that the geometric
contributions arise directly from moving-frame geometry and Darboux
connection factorization.
The present work extends this viewpoint to:
\begin{itemize}
\item
moving-frame connection geometry,
\item
Darboux rotational fields,
\item
coordinate-free Dirac reductions,
\item
nilpotent covariant differential complexes,
\item
matrix supersymmetric geometry for Frenet curves,
\item
curvature-obstructed supersymmetric structures on surfaces,
\item
dual geometric realizations of supersymmetric partners.
\end{itemize}
\section{Further Directions}
Several natural extensions remain open, including Dirac operators on
spatial curves with torsion, nonrelativistic Pauli reduction,
spinorial geometric phases, global supersymmetric structures,
spectral theory of Darboux-factorized operators, dual geometric
transformations, and experimental tests of connection-induced
corrections.
An especially interesting extension may arise for constrained
surfaces embedded in four-dimensional space, where the moving frame
naturally acquires a complex structure.
In this case the connection splits according to
\[
SO(4)\cong SU(2)_+\times SU(2)_-,
\]
suggesting the possible emergence of chiral complex supersymmetric
structures associated with complexified Frenet geometry and normal
bundle connections.
The dependence of generalized moving-frame reductions on the choice
of orthogonal congruence is reminiscent of the appearance of
coordinate- or representation-dependent geometric terms in curved-space
quantization frameworks.
In the present setting, the Fermi-type geometry appears as a
distinguished congruence eliminating the additional transverse
connection contribution \(k_2\).
\appendix
\section{Gauss--Codazzi Structure Relations for Orthogonal Nets}
Let \((e_1,e_2)\) be an orthonormal moving frame associated with a
planar orthogonal coordinate net, and let
\((\theta^1,\theta^2)\) denote the dual coframe.
The Levi--Civita connection is determined by the connection one-form
\[
\omega
=
k_1\theta^1+k_2\theta^2.
\]
The corresponding structure equations are
\[
d\theta^1
=
-\omega\wedge\theta^2,
\qquad
d\theta^2
=
\omega\wedge\theta^1.
\]
Equivalently, the orthonormal frame satisfies the commutator relation
\[
[e_1,e_2]
=
-k_1e_1-k_2e_2.
\]
\subsection{Laplace--Beltrami Operator in Moving Frames}
The scalar Laplace--Beltrami operator acting on functions is
\[
\Delta f
=
\operatorname{div}(\nabla f).
\]
In orthonormal moving frames one has
\[
\Delta f
=
\sum_i
\left(
e_i^2f-(\nabla_{e_i}e_i)f
\right).
\]
Using
\[
\nabla_{e_1}e_1=-k_1e_2,
\qquad
\nabla_{e_2}e_2=k_2e_1,
\]
one obtains
\[
\boxed{
\Delta
=
e_1^2+e_2^2-k_2e_1+k_1e_2.
}
\]
Thus the moving-frame Laplacian used throughout the paper coincides
with the standard Laplace--Beltrami operator written in orthonormal
moving frames.
\subsection{Gauss Compatibility Relation}
For planar orthogonal geometry the Gauss compatibility relations
imply
\[
d\omega=0.
\]
Using
\[
\omega
=
k_1\theta^1+k_2\theta^2,
\]
one obtains
\[
d\omega
=
dk_1\wedge\theta^1
+
dk_2\wedge\theta^2
+
k_1d\theta^1
+
k_2d\theta^2.
\]
Substituting the structure equations gives
\[
d\omega
=
(e_2k_1)\theta^2\wedge\theta^1
+
(e_1k_2)\theta^1\wedge\theta^2
-
k_1^2\theta^1\wedge\theta^2
-
k_2^2\theta^1\wedge\theta^2.
\]
Therefore,
\[
\boxed{
e_1k_2-e_2k_1
=
k_1^2+k_2^2.
}
\]
\subsection{Fermi-Type Normal Coordinates}
For Fermi-type normal coordinates, the orthogonal normal congruence
is geodesic along the reference curve.
Consequently,
\[
k_2=0,
\qquad
k_1=\kappa,
\]
where \(\kappa\) is the curvature of the constrained curve.
\section*{Acknowledgements}
The author thanks A.~V.~Golovnev for valuable discussions concerning
geometric potentials and constrained quantization.

\end{document}